\newcommand{\R}{\Bbb R}
\newcommand{\mc}{\mathcal}
\newcommand{\be}{\begin{equation}}
\newcommand{\en}{\end{equation}}
\newcommand{\D}{{\mc D}}
\newcommand{\I}{{\cal I}}
\newcommand{\Lc}{{\cal L}}
\newcommand{\N}{\mathbb N}
\newcommand{\Hil}{\mc H}
\newcommand{\C}{{\cal C}}
\newcommand{\bea}{\begin{eqnarray}}
\newcommand{\ena}{\end{eqnarray}}
\newtheorem{thm}{Theorem}
\newtheorem{lemma}[thm]{Lemma}
\newtheorem{prop}[thm]{Proposition}
\newenvironment{proof}{\noindent {\bf Proof --}}{\hfill$\square$ \vspace{3mm}\endtrivlist}
\begin{document}

\begin{center}
{\Large \bf A bounded version of bosonic creation and annihilation operators and their related
{\em quasi-coherent states}}   \vspace{2cm}\\

{\large F. Bagarello}
\vspace{3mm}\\
  Dipartimento di Metodi e Modelli Matematici,
Facolt\`a di Ingegneria,\\ Universit\`a di Palermo, I - 90128  Palermo, Italy\\
E-mail: bagarell@unipa.it\\home page:
www.unipa.it$\backslash$\~\,bagarell
\vspace{2mm}\\
\end{center}

\vspace*{2cm}

\begin{abstract}
\noindent Coherent states are usually defined as eigenstates of an
unbounded operator, the so-called annihilation operator. We
propose here possible constructions of {\em quasi-coherent
states}, which turn out to be {\em quasi} eigenstate of a
\underline{bounded} operator related to an annihilation-like
operator. We use this bounded operator to construct a sort of
modified harmonic oscillator and we  analyze the dynamics of this
oscillator from an algebraic point of view.

\end{abstract}

\vfill

\newpage

\section{Introduction}

One of the very few undergraduate examples of quantum mechanical
systems for which a complete solution is known, the harmonic
oscillator, suggests the introduction of the so-called {\em ladder
operators}, the annihilation and creation operators $a$ and
$a^\dagger$. They turn out to obey the commutation rule
$[a,a^\dagger]=\I$ which implies that they are unbounded operators
so that, for instance, they are not everywhere defined. Exactly
the same kind of operators appear also in the analysis of a rather
different system, a gas of bosons, and they are also almost
everywhere in quantum optics. This explain the amount of papers
related to many aspects  of these operators and to other
quantities which are related to $a$ and $a^\dagger$, like, for
instance, the coherent states, in any of their forms, and the
squeezed states.

Generalized creation and annihilation operators, $A$ and
$A^\dagger$, have been  introduced in recent years and used to
construct generalized version of coherent states, see for instance
\cite{ali} and references therein. However, also these new
operators are usually unbounded.

In this paper we consider all these objects from the point of view
of bounded operators. In other words, we propose a natural cutoff,
arising from $A$ and $A^\dagger$ themselves, and we use this
cutoff to build up new bounded operators, their free quantum
evolution and what we will call the {\em quasi-coherent states}.

This paper is organized as follows: in the next section we discuss
the mathematical background and we introduce the cutoff. In
Section III  we analyze the algebraic dynamics related to the free
regularized hamiltonian using topological results connected to the
so called quasi-* algebras, \cite{bagJMP,bagtra}, both at a finite
level and at the level of the derivations. Section IV is devoted
to the introduction and the analysis of {\em quasi-coherent
states},  using both the Ali's and the Gazeau-Klauder's languages,
\cite{ali,GK}. We conclude the paper with a short section where
some connections with quons are discussed,  \cite{moh,fivetc}, and
where it is shown that quons are just particular cases of a much
more general situation.

\section{Mathematical ingredients}

Let $\Hil$ be a separable Hilbert space and $\{\Phi_n,\,n\in
\N_0,\, \N_0=0,1,2\ldots\}$ an o.n. basis of $\Hil$. We define the
following operators \be P_{i,j}=|\Phi_i><\Phi_j|,\quad\quad
P_i=P_{i,i}, \quad\quad Q_L=\sum_{i=0}^L\,P_i,\label{21}\en where
we have used the Dirac {\em bra-ket} notation. They satisfy the
following properties: \be \|P_{i,j}\|=1,\quad\|Q_L\|=1, \quad
P_{i,j}P_{k,l}=\delta_{j,k}\,P_{i,l},\quad  P_i^\dagger=P_i, \quad
Q_L^\dagger=Q_L\quad \forall i,j,k,l,L\in\N_0,\label{22}\en as
well as $P_i^2=P_i$ and $Q_L^2=Q_L$. Therefore both $P_i$ and
$Q_L$ are orthogonal projectors. It is clear that $P_i$ projects
on the subspace of $\Hil$ generated by the single vector $\Phi_i$,
while $Q_L$ projects on the subspace generated by
$\{\Phi_0,\Phi_1,\ldots,\Phi_L\}$.

Let now $\{x_l,\,l\in\N_0\}$ be a generic sequence of non negative
numbers, $x_l\geq 0$ for all $l\in\N_0$. We will assume that only
$x_0$ could be equal to zero, while all the other coefficients are
strictly positive. We put \be
A_L=\sum_{l=0}^L\,\sqrt{x_{l+1}}\,P_{l,l+1},\quad \mbox{ so that
}\quad
A_L^\dagger=\sum_{l=0}^L\,\sqrt{x_{l+1}}\,P_{l+1,l}\label{23}\en
Both these operators are bounded, as far as $L<\infty$. Indeed a
very easy estimate is the following: $\|A_L\|=\|A_L^\dagger\|\leq
\sum_{l=0}^L\,\sqrt{x_l} $, which is a consequence of the fact
that $\|P_{i,j}\|=1$ for all $i$ and $j$. However, we can do
better since, for any $\Phi\in\Hil$,
$$
\|A_L\phi\|\leq
\sum_{l=0}^L\,\sqrt{x_{l+1}}\|<\Phi_{l+1},\phi>\,\Phi_{l+1}\|=\sum_{l=0}^L\,\sqrt{x_{l+1}}\,
|<\Phi_{l+1},\phi>|$$ $$
\leq\sqrt{\sum_{l=0}^L\,x_{l+1}}\,\,\sqrt{\sum_{l=0}^L\,|<\Phi_{l+1},\phi>|^{\,2}}\leq\|\phi\|\,
\sqrt{\sum_{l=0}^L\,x_{l+1}},
$$
which implies that, for all $L\in\N$, \be
\|A_L\|=\|A_L^\dagger\|\leq
d_L:=\sqrt{\sum_{l=0}^L\,x_{l+1}}.\label{24}\en We can also derive a
lower bound for $\|A_L\|$. For that it is enough to notice that
$\|A_L\|=\sup_{\phi\in\Hil,\,\|\phi\|=1}\,\|A_L\,\phi\|\geq
\|A_L\,\Phi_j\|$, for all values of $j\in\N_0$. Therefore
$\|A_L\|\geq \sqrt{ x_{j} }$, for all $j=0,1,\ldots,L$, which is
strictly positive if $x_0>0$. If the sequence $\{x_l,\,l\in\N_0\}$
belongs to $l^1(\N_0)$, then we can conclude that \be \sqrt{x_j}\leq
\|A_L\|\leq \sqrt{\|x\|_1},\label{25}\en for all possible
$j=0,1,\ldots,L$. We use $\|x\|_1$ to indicate the norm of
$\{x_l,\,l\in\N_0\}$ in $l^1(\N_0)$. If this is the case, then it is
trivial to check that $A_L$ converges in the uniform topology to a
bounded operator. Indeed, for all $L> M$, we find $\|A_L-A_M\|^2\leq
\sum_{l=M+1}^L\,x_{l+1}$, which goes to zero when
$L,M\rightarrow\infty$. Obviously the same conclusion does not hold,
in general, if $\{x_l,\,l\in\N_0\}\notin l^1(\N_0)$ and it {\em
surely} does not hold if  $\{x_l,\,l\in\N_0\}$ is not a bounded
sequence. In this case, in fact, because of the lower bound
$\|A_L\|\geq \sqrt{ x_{j} }$, $\forall j=0,1,\ldots,L$,
$\|A_L\|\rightarrow\infty$ when $L\rightarrow\infty$. This is rather
common in concrete applications since, for example for the {\em
standard} quantum harmonic oscillator, we just have $x_k=k$.
Therefore, an explicit choice of $\{x_l\}$ must be dictated by
whether we want $A_L$ to converge to a bounded operator or not
(which is what we really have in mind since, as we have just
discussed, it is what happens for an harmonic oscillator). A
possible way to define $A$ and $A^\dagger$ in this situation is
based on the fact that the o.n. basis $\{\Phi_n,\,n\in \N_0\}$
belongs to the domain of $A_L$ for each $L$. Introducing the
sequence $\chi_n$, which is equal to 1 for all $n\geq 0$ and 0
otherwise, we can easily check that \be A_L\,\Phi_k=\left\{
    \begin{array}{ll}
        0, \hspace{3.1cm} \mbox{ if } k=0,  \\
        \chi_{L+1-k}\,\sqrt{x_k}\,\Phi_{k-1}, \quad \mbox{ if } k>0,\\
       \end{array}
        \right. \quad\mbox{and}\quad
        A_L^\dagger\,\Phi_k=\chi_{L-k}\,\sqrt{x_{k+1}}\,\Phi_{k+1}.\label{26}
\en As it is clear, they behave like creation and annihilation
operators but with two significant differences: they give a non
zero result only for a finite set of $k$'s, corresponding to a
finite dimensional subset of $\Hil$. The second difference is that
the natural numbers $k$ are here replaced with a more general
sequence of non negative numbers, $k\rightarrow x_k$, as in
\cite{ali} and references therein.

The above formulas admit limit when $L\rightarrow\infty$, since
$\lim_{L\rightarrow\infty}\,\chi_{L-k}=\lim_{L\rightarrow\infty}\,\chi_{L+1-k}=1$
for all fixed $k$. This is the way in which we introduce here the
operators $A$ and $A^\dagger$ if $\{x_l,\,l\in\N_0\}\notin
l^1(\N_0)$: we just put \be A\,\Phi_k=\left\{
    \begin{array}{ll}
        0,\hspace{1.8cm}\mbox{ if } k=0,  \\
        \sqrt{x_k}\,\Phi_{k-1}, \quad \mbox{ if } k>0\\
       \end{array}
        \right. \quad\mbox{and}\quad
        A^\dagger\,\Phi_k=\sqrt{x_{k+1}}\,\Phi_{k+1}.\label{27}
\en Of course, this does not imply  that the definition (\ref{23})
extends to $A=\sum_{l=0}^\infty\,\sqrt{x_{l+1}}\,P_{l,l+1}$,
because the series does not converge, for generic $x_j$, in the
usual topologies of bounded operators.

It is straightforward to check that the operators $A$ and $A_L$
are related in the following way: \be A_L=Q_{L+1}\,A\,Q_{L+1},
\hspace{1cm} A_L^\dagger=Q_{L+1}\,A^\dagger\,Q_{L+1}.\label{28}
\en The following equalities moreover hold true: \be
A\,Q_{L+1}=Q_L\,A,\quad Q_{L+1}\,A^\dagger=A^\dagger\,Q_L,\quad
A\,P_{l}=P_{l-1}\,A,\quad P_l\,A^\dagger=A^\dagger\,P_{l-1}
\label{29} \en whose proof, again, is left to the reader. Notice
that the first equality implies in particular that \be
A^\dagger_L\,A_L= Q_{L+1}\,A^\dagger\,A\, Q_{L+1}, \quad
A_L\,A_L^\dagger= Q_{L}\,A\,A^\dagger\, Q_{L+1}.\label{210}\en
Indeed we have $A^\dagger_L\,A_L= Q_{L+1}\,A^\dagger\,Q_{L+1}\,
Q_{L+1}\,A\,Q_{L+1} =Q_{L+1}\, A^\dagger\,Q_{L+1}\,A\, Q_{L+1}=
Q_{L+1} \,A^\dagger\,A\, Q_{L+2}\\ \, Q_{L+1}=
Q_{L+1}\,A^\dagger\,A\, Q_{L+1}$, while
$A_L\,A_L^\dagger=Q_{L+1}\,A\,Q_{L+1}\,Q_{L+1}\,A^\dagger\,Q_{L+1}=
Q_{L+1}\,A\,Q_{L+1}\,A^\dagger\,Q_{L+1}\\=
Q_{L+1}\,Q_{L}\,A\,A^\dagger\, Q_{L+1}= Q_{L}\,A\,A^\dagger\,
Q_{L+1}$. Other relevant formulas are the following commutation
rules

\be        [Q_{L},A]=P_L\,A=A\,P_{L+1},\quad
[Q_{L},A^\dagger]=-P_{L+1}\,A^\dagger=-A^\dagger\,P_L,
\label{211}\en and \be   [Q_L,A^\dagger\, A]=[Q_L,
A\,A^\dagger]=0.
\label{211bis}\en

This list of  useful formulas is completed by
$\|P_lAP_s\|=\sqrt{x_s}\,\delta_{l+1,s}$ and by the following
commutation rule, \be
[A_L,A_L^\dagger]=\sum_{l=0}^L\,x_{l+1}\,\left(P_l-P_{l+1}\right)\,\Rightarrow\,
[A_L,A_L^\dagger]\Phi_k= \left\{
    \begin{array}{ll}
        x_1\,\Phi_0,\hspace{4.2cm}\mbox{ if } k=0,  \\
        \left(\chi_{L-k}\,x_{k+1}-\chi_{L+1-k}\,x_{k}\right)\,\Phi_{k}, \hspace{.4cm} \mbox{ if } k>0.\\
       \end{array}
        \right.\label{212}
\en This last equation, again, admits limit when
$L\rightarrow\infty$, and the limit is  \be [A_,A^\dagger]\Phi_k=
\left\{
    \begin{array}{ll}
        x_1\,\Phi_0,\hspace{4.3cm}\mbox{ if } k=0,  \\
        \left(x_{k+1}-x_{k}\right)\,\Phi_{k}, \hspace{2.7cm} \mbox{ if } k>0.\\
       \end{array}
        \right.\label{213}\en
Let us notice that, if $x_k=k$, i.e. for an harmonic oscillator,
this formula becomes $[A_,A^\dagger]\Phi_k=\Phi_k$ for all $k$,
which implies that $[A_,A^\dagger]=\I$, as it must be. In this
case a rigorous meaning can be given to the otherwise in general
formal expression
$[A_,A^\dagger]=\,x_1\,P_0+\sum_{l=1}^\infty\,(x_{l+1}-x_l)\,P_l$,
which is equal, in this case, to
$[A_,A^\dagger]=\sum_{l=0}^\infty\,P_l=\I$. However, this is not
the only case in which this expression is well defined: suppose
indeed that the following hold: $x_{l+1}\geq x_l$ for all
$l\in\N_0$ and
$\lim_{l\rightarrow\infty}\,x_l=\overline{x}<\infty$. In this case
we have
$$
\|\sum_{l=1}^\infty\,(x_{l+1}-x_l)\,P_l\|\leq
\sum_{l=1}^\infty\,(x_{l+1}-x_l)\,\|P_l\|=\sum_{l=1}^\infty\,(x_{l+1}-x_l)=(x_2-x_1)+
(x_3-x_2)+(x_4-x_3)+\ldots=
$$
$$=\lim_{l\rightarrow\infty}\,x_l-x_1=\overline{x}-x_1.$$
This implies that
$[A_,A^\dagger]=\,x_1\,P_0+\sum_{l=1}^\infty\,(x_{l+1}-x_l)\,P_l$
is well defined in the uniform topology, even if the sequence of
the $x_j$'s does not converge to zero. This is not a big surprise:
unbounded operators may have bounded commutators!

\vspace{2mm}

In Section IV we will make use of other properties of the
operators $A_L$ and $A_L^\dagger$. In particular using (\ref{23}),
 (\ref{28}) and (\ref{29}), it is possible to check that \bea \left\{
    \begin{array}{ll}
\left(A_L^\dagger\right)^2=\,\sum_{l=0}^{L-1}\,\sqrt{x_{l+1}\,x_{l+2}}\,P_{l+2,l}=\left(A^\dagger\right)^2\,Q_{L-1},\\
\left(A_L^\dagger\right)^3=\,\sum_{l=0}^{L-2}\,\sqrt{x_{l+1}\,x_{l+2}\,x_{l+3}}\,P_{l+3,l}
=\left(A^\dagger\right)^3\,Q_{L-2},\\
\ldots\ldots\ldots\\
\left(A_L^\dagger\right)^L=\,\sum_{l=0}^{1}\,\sqrt{x_{l+1}\,\ldots\,x_{l+L}}\,P_{l+L,l}
=\left(A^\dagger\right)^L\,Q_{1},\\
\left(A_L^\dagger\right)^{L+1}=\,\sqrt{x_{1}\,\ldots\,x_{L+1}}\,P_{L+1,0}=\left(A^\dagger\right)^{L+1}\,Q_{0},\\
\left(A_L^\dagger\right)^{L+2}=\left(A_L^\dagger\right)^{L+3}=\ldots=0.\\
       \end{array}
        \right.
 \label{214}\ena
This behavior is interesting because it extends, in a certain
sense, what happens for fermions: if $b$ and $b^\dagger$ are
fermion annihilation and creation operators, then it is well known
that $b^2=\left(b^\dagger\right)^2=0$. In this case, $A_L, A_L^2,
\ldots A_L^{L+1}$ are different from zero but all the other powers
are zero, and this result does not depend on the original choice
of $x_j$. It may be worth noticing, however, that
$\{A_L,A_L^\dagger\}=\I$ is in general not satisfied, not even in
an approximated form.

\section{A generalized harmonic oscillator and its algebraic dynamics}
The operators $A_L$ and $A_L^\dagger$ are closely related to those
introduced in \cite{bagJMP} in connection with a {\em standard
harmonic oscillator}. In this section we will repeat in some
details and giving more results the same analysis given in
\cite{bagJMP} for our generalized model. Of course, some
differences will arise at the beginning because of the different
 point of view we are considering here.

First of all let us remark that the o.n. basis of $\Hil$,
$\{\Phi_j\}$, belongs to the domain of all the relevant operators we
will consider here: $A$, $A^\dagger$, $A^\dagger\,A$,
$A\,A^\dagger$, as well as their {\em regularized} counterparts,
i.e. those obtained replacing the generic operator $X$ with
$X_L:=Q_{L+1}\,X\,Q_{L+1}$. Actually, they do much more than this:
they belong to the domain of all the powers of these operators.
Therefore we have, for instance, $\Phi_j\in\D:=
D^\infty(H_o):=\bigcap_{k\geq 0}\,D(H_o^k)$, for all $j$, where
$H_o=A^\dagger\, A$. The set $\D$ is dense in $\Hil$. As in
\cite{bagJMP} we introduce the *-algebra $\Lc^+(\D)$  of all the
closable operators defined on $\D$ which, together with their
adjoints, map $\D$ into itself. It is clear that all  powers of $A$
and $A^\dagger$ belong to this set. As we have already seen, $A_L$
and $A_L^+$ belong to $B(\Hil)$, but they also belong to
$\Lc^\dagger(\D)$ for any $L$.

In \cite{las}, the topological structures of both  $\D$ and
$\Lc^+(\D)$ are discussed in details; in particular, in $\D$ the
topology is defined by the following  seminorms: \be \phi \in \D
\rightarrow \|\phi\|_n\equiv \|H^n\phi\|, \label{31} \en where $n$
is a natural integer, $H$ is the closure of $H_o$ and $\|\, \|$ is
the norm of $\Hil$. The topology in $\Lc^+(\D)$ is introduced in
the following way. We start defining the set $\C$ of all the
positive, bounded and continuous functions $f(x)$ on $\R_+$, which
are decreasing faster than any inverse power of $x$:
$\sup_{x\in\mathbb{R}_+}\,f(x)\,x^n<\infty$ for each
$n\in\mathbb{N}_0$. The seminorms on $\Lc^+(\D)$ are labelled by
functions in $\C$ and by the integers $\N_0$. We have \be X\in
\Lc^+(\D) \rightarrow \|X\|^{f,k} \equiv
\max\left\{\|f(H)XH^k\|,\|H^kXf(H)\|\right\}. \label{32} \en Here
$\|\,\|$ is the usual norm in $B(\Hil)$. We use for this norm the
same notation as in equation (\ref{31}) since there is no
possibility of confusion. We observe that definition (\ref{32})
implies that $\|X\|^{f,k}=\|X^\dagger\|^{f,k}$ for each
$X\in\Lc^\dagger(\D)$, where $X^\dagger=X^*|_{\D}$. We call $\tau$
the topology on $\Lc^+(\D)$ defined by the seminorms (\ref{32}).
In \cite{las} it has been proved that $\Lc^+(\D)[\tau]$ is a
complete locally convex topological *-algebra.

Notice that the two contributions in the definition (\ref{32}) are
exactly of the same form. Therefore, the estimate of
$\|f(H)XH^k\|$ is very similar to the estimate of $\|H^kXf(H)\|$.
This is why, from now on, we will identify  $\|X\|^{f,k}$ simply
with $\|f(H)XH^k\|$.

With this in mind, and by means of the spectral decomposition for
$H$, $H=\sum_{l=1}^\infty x_l P_l$, we see that the seminorms can be
estimated as follows: \be X\in \Lc^+(\D) \longrightarrow \|X\|^{f,k}
\leq \sum_{l,s=1}^{\infty} f(x_l)x_s^k\|P_l\,X\,P_s\|. \label{33}
\en It is possible to check that, for all $X\in\Lc^+(\D)$, the
operator $X_{L-1}:=Q_{L}\,X\,Q_{L}$ belongs to $B(\Hil)$. Indeed we
have $\|X_{L-1}\|\leq
\|\,X\,Q_{L}\|=\sup_{\varphi\in\Hil\,\|\varphi\|=1}\,\|X\,Q_{L}\varphi\|<\infty$
since, for each fixed $L$ and for each $\varphi\in\Hil$,
$Q_{L}\varphi$ belongs to a finite dimensional Hilbert space. In
particular we have \be H_L=Q_{L+1}\,H\,Q_{L+1}=A_L^\dagger\,A_L\in
B(\Hil).\label{34} \en We can easily prove the following

\begin{lemma} Suppose that the sequence $\{x_s\}$ is increasing to $+\infty$ when $s\rightarrow\infty$. Then, for each
$l\in\mathbb{N}_0$,
$\lim_{L\rightarrow\infty}\|H^{-l}(\I-Q_L)\|=0$. Moreover, for
each $X\in\Lc^\dagger(\D)$,
$\tau-\lim_{L\rightarrow\infty}\,X_L=X$.
\end{lemma}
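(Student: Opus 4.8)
The plan is to prove the two assertions separately, deducing the second from the first. Throughout I write $R_{M}:=\I-Q_{M}=\sum_{s\geq M+1}P_s$, and I use that, by the spectral decomposition $H=\sum_{l=1}^\infty x_l\,P_l$, the operators $H$, $f(H)$ and every power $H^{\pm m}$ (on the appropriate subspace) are diagonal in the basis $\{\Phi_s\}$ and hence commute with each $Q_M$ and $P_l$.

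For the first assertion I would argue by elementary spectral calculus. Since $\I-Q_L$ is the orthogonal projection onto the closed span of $\{\Phi_s:s\geq L+1\}$, and this subspace avoids the kernel $\Phi_0$ of $H$, the operator $H^{-l}(\I-Q_L)$ is diagonal with eigenvalues $x_s^{-l}$ for $s\geq L+1$ and $0$ otherwise, so $\|H^{-l}(\I-Q_L)\|=\sup_{s\geq L+1}x_s^{-l}$. Because $\{x_s\}$ is increasing the supremum is attained at $s=L+1$, giving $\|H^{-l}(\I-Q_L)\|=x_{L+1}^{-l}$, which tends to $0$ since $x_{L+1}\to+\infty$.

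For the second assertion I would first record the algebraic identity $X-X_L=R_{L+1}\,X+Q_{L+1}\,X\,R_{L+1}$, obtained by expanding $X_L=Q_{L+1}XQ_{L+1}=(\I-R_{L+1})X(\I-R_{L+1})$. Using the identification of $\|\cdot\|^{f,k}$ with $\|f(H)\,\cdot\,H^k\|$ adopted after (\ref{33}), it suffices to show that $\|f(H)R_{L+1}XH^k\|$ and $\|f(H)Q_{L+1}XR_{L+1}H^k\|$ both vanish as $L\to\infty$, for every $f\in\C$ and $k\in\N_0$. Fix a positive integer $m$ (any value works; $m=1$ is enough) and set $g(x):=x^m f(x)$, which again belongs to $\C$. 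In the first term the decaying factor $f(H)$ sits next to the tail projection, so I write $f(H)R_{L+1}=R_{L+1}H^{-m}g(H)$ and bound $\|f(H)R_{L+1}XH^k\|\leq\|R_{L+1}H^{-m}\|\,\|g(H)XH^k\|=x_{L+2}^{-m}\,\|X\|^{g,k}$, which tends to $0$ by the first assertion and the finiteness of $\|X\|^{g,k}$ for $X\in\Lc^\dagger(\D)$.

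The delicate term is the second one, where the growing factor $H^k$ sits on the range of $R_{L+1}$ on which $H$ is large; the idea is to trade this positive power for a convergent negative power at the cost of a higher seminorm. Commuting $R_{L+1}$ through $H^k$ and inserting $H^{m}H^{-m}$ gives $f(H)Q_{L+1}XR_{L+1}H^k=\bigl(f(H)Q_{L+1}XH^{k+m}\bigr)\bigl(H^{-m}R_{L+1}\bigr)$, so the norm is at most $\|f(H)Q_{L+1}XH^{k+m}\|\cdot x_{L+2}^{-m}$. The first factor is bounded uniformly in $L$, since $\|f(H)Q_{L+1}XH^{k+m}\|\leq\|X\|^{f,k+m}+\|f(H)R_{L+1}XH^{k+m}\|$ and the last term is controlled exactly as the first term above (with $k$ replaced by $k+m$); hence the product tends to $0$. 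This is the step I expect to be the main obstacle, because a naive factorisation loses control of $H^k$ on the tail, and the whole point is that $\|X\|^{f,k+m}$ is finite precisely because $X\in\Lc^\dagger(\D)$. Finally I would note that the symmetry $\|Y\|^{f,k}=\|Y^\dagger\|^{f,k}$ from (\ref{32}) makes the second half of the seminorm (\ref{32}) automatic, completing the proof that $\tau\text{-}\lim_{L\to\infty}X_L=X$.
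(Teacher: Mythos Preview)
Your proof is correct. For the first assertion you use the same spectral-calculus idea as the paper, stated slightly more sharply: the paper bounds $\|H^{-l}(\I-Q_L)\varphi\|^2$ by $x_{L+1}^{-2l}\|\varphi\|^2$ via $\sum_{s\geq L+1}x_s^{-2l}\langle\varphi,P_s\varphi\rangle$, whereas you read off the operator norm directly as $\sup_{s\geq L+1}x_s^{-l}=x_{L+1}^{-l}$; these are the same computation.

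For the second assertion the paper does not actually give an argument but refers to \cite{bagtra}. Your decomposition $X-X_L=R_{L+1}X+Q_{L+1}XR_{L+1}$ together with the two estimates
\[
\|f(H)R_{L+1}XH^k\|\leq\|H^{-m}R_{L+1}\|\,\|X\|^{g,k},\qquad
\|f(H)Q_{L+1}XR_{L+1}H^k\|\leq\|X\|^{f,k+m}\,\|H^{-m}R_{L+1}\|,
\]
with $g(x)=x^mf(x)\in\C$, is precisely the standard way to carry this out and is in the spirit of the cited reference. One small simplification: since $Q_{L+1}$ commutes with $f(H)$ and $\|Q_{L+1}\|=1$, you can bound $\|f(H)Q_{L+1}XH^{k+m}\|\leq\|X\|^{f,k+m}$ directly, without the extra triangle-inequality step through $\|f(H)R_{L+1}XH^{k+m}\|$.
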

\begin{proof}
Since $H^{-l}=\sum_{s=1}^\infty x_s^{-l}\, P_s$, and
$\I-Q_L=\sum_{s=L+1}^\infty P_s$,  then
$H^{-l}(\I-Q_L)=\sum_{s=L+1}^\infty x_s^{-l}\,P_s$. Therefore, for
each $\varphi\in\Hil$,
$$\|H^{-l}(\I-Q_L)\varphi\|^2=\|H^{-l}(\I-Q_L)\,\varphi\|^2=
<\varphi,\,H^{-2l}(\I-Q_L)\varphi>=
$$
$$=\left(\sum_{s=L+1}^\infty
x_s^{-2l}\,<\varphi,\,P_s\,\varphi>\right)\leq
\frac{1}{x_{L+1}^{2l}}\,\sum_{s=0}^\infty
\,<\varphi,\,P_s\,\varphi>\leq\,
\frac{1}{x_{L+1}^{2l}}\,\|\varphi\|^2,
$$
so that the first statement follows from of our assumption on
$x_l$.

The proof of the second statement is very similar to that given in
\cite{bagtra}, and will not be repeated here.

\end{proof}

Let us now define, for each $X\in\Lc^\dagger(\D)$,
$\alpha_L^t(X)=\,e^{iH_L\,t}\,X\,e^{-iH_L\,t}$. This operator
satisfies the following, well defined, Heisenberg equation of
motion $\frac{d}{dt}\,\alpha_L^t(X)=i[H_L,\,\alpha_L^t(X)]$.
Notice that, on the contrary, the formal equation of motion
$\frac{d}{dt}\,\alpha^t(X)=i[H,\,\alpha^t(X)]$ needs not to be
well defined because of domain problems. However, it is easy to
check that the sequence $\{\alpha_L^t(X)\}$ define the algebraic
dynamics of $X$ in the following sense:
\begin{prop}\label{prop1}
Suppose that the sequence $\{x_s\}$ is such that $\exists
n_o\in\mathbb{N}$ such that $\{x_s^{-n_o}\}$ belongs to
$l^1(\mathbb{N})$. Then the sequences $\{e^{iH_L\,t}\}$ and
$\{\alpha_L^t(X)\}$ are both $\tau-$Cauchy in $\Lc^+(\D)$.

\end{prop}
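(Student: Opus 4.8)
The plan is to exploit the fact that $H$ and each $H_L=Q_{L+1}HQ_{L+1}$ are simultaneously diagonal in the basis $\{\Phi_j\}$, so that the unitaries $U_L:=e^{iH_L t}$ are diagonal as well and commute with every $f(H)$ and with every power $H^k$. Writing $H=\sum_{l\ge1}x_l P_l$ and $H_L=\sum_{l=1}^{L+1}x_l P_l$, one reads off $U_L=\sum_{j\le L+1}e^{ix_jt}P_j+\sum_{j>L+1}P_j$, so that for $L>M$ the difference telescopes to the \emph{diagonal} operator $U_L-U_M=\sum_{s=M+2}^{L+1}(e^{ix_st}-1)\,P_s$, supported on the high modes $s\ge M+2$. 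This commutation, together with $\|U_L\|=1$ and the localization of $U_L-U_M$ on large indices, is the structural fact that drives the whole argument. Throughout I use the identification $\|X\|^{f,k}\equiv\|f(H)XH^k\|$ adopted after (\ref{32}), and the fact that every seminorm of $X$ is finite because $X\in\Lc^+(\D)$.

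For $\{e^{iH_Lt}\}$ the estimate is immediate from diagonality: since $f(H)(U_L-U_M)H^k=\sum_{s=M+2}^{L+1}f(x_s)\,x_s^{\,k}(e^{ix_st}-1)P_s$,
\[
\|e^{iH_Lt}-e^{iH_Mt}\|^{f,k}=\sup_{M+2\le s\le L+1}f(x_s)\,x_s^{\,k}\,|e^{ix_st}-1|\le 2\sup_{s\ge M+2}f(x_s)\,x_s^{\,k}.
\]
Because $f\in\C$ decays faster than any inverse power, $f(x_s)\,x_s^{\,k+n_o}\le C_{k+n_o}:=\sup_x f(x)x^{k+n_o}<\infty$, whence $f(x_s)x_s^{\,k}\le C_{k+n_o}\,x_s^{-n_o}$; the right-hand side is then bounded by $2C_{k+n_o}\sup_{s\ge M+2}x_s^{-n_o}$, which tends to $0$ as $M\to\infty$ (uniformly in $L>M$) since the hypothesis $\{x_s^{-n_o}\}\in l^1(\N)$ forces $x_s^{-n_o}\to0$. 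This proves the $\tau$-Cauchy property of $\{e^{iH_Lt}\}$.

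For $\{\alpha_L^t(X)\}$ I would start from the telescoping identity
\[
\alpha_L^t(X)-\alpha_M^t(X)=(U_L-U_M)\,X\,U_L^\dagger+U_M\,X\,(U_L^\dagger-U_M^\dagger),
\]
and bound the two terms separately in $\|\cdot\|^{f,k}$, commuting $f(H)$ and $H^k$ through the diagonal unitaries. The first term becomes $f(H)(U_L-U_M)\,XH^k\,U_L^\dagger$; splitting $f=f_1f_2$ with $f_1=f_2=\sqrt f\in\C$ and moving $f_2(H)$ next to $X$ gives the bound $\|f_1(H)(U_L-U_M)\|\,\|f_2(H)XH^k\|=\big(2\sup_{s\ge M+2}f_1(x_s)\big)\,\|X\|^{f_2,k}$, which vanishes since $f_1(x_s)\to0$ while $\|X\|^{f_2,k}<\infty$. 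The second term becomes $U_M\,f(H)XH^k\,(U_L^\dagger-U_M^\dagger)$; here I would borrow negative powers of $H$ by writing $H^k=H^{k+n_o}H^{-n_o}$, so that it is controlled by $\|f(H)XH^{k+n_o}\|\,\|H^{-n_o}(U_L^\dagger-U_M^\dagger)\|=\|X\|^{f,k+n_o}\,\big(2\sup_{s\ge M+2}x_s^{-n_o}\big)\to0$.

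The hard part is exactly this second term: the difference contains the generally \emph{unbounded} $X$ sandwiched between $H^k$ and the difference of unitaries, so a crude estimate yields only the useless factor $\|U_L^\dagger-U_M^\dagger\|\le 2$. The resolution is to use the commutation with functions of $H$ to pair the high-mode operator $(U_L^\dagger-U_M^\dagger)$ with the decaying factor $H^{-n_o}$, leaving behind the genuine finite seminorm $\|X\|^{f,k+n_o}$ of $X$; this is the step that consumes the summability hypothesis $\{x_s^{-n_o}\}\in l^1(\N)$ (which in particular guarantees $x_s^{-n_o}\to0$, i.e.\ the decay needed in both estimates). Since every seminorm of each difference is shown to tend to $0$ as $M\to\infty$, uniformly in $L>M$, both $\{e^{iH_Lt}\}$ and $\{\alpha_L^t(X)\}$ are $\tau$-Cauchy in $\Lc^+(\D)$, as claimed.
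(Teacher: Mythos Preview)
Your proof is correct and follows the same route as the paper: the same telescoping of $\alpha_L^t(X)-\alpha_M^t(X)$, the same commutation of $f(H)$ and $H^k$ through the diagonal unitaries, and the same key device of borrowing negative powers $H^{-n_o}$ to pair with the high-mode difference $U_L-U_M$. The only cosmetic differences are that you take the $\sup$ rather than the $\sum$ over the diagonal entries (so in fact you consume only the consequence $x_s^{-n_o}\to 0$ of the $l^1$ hypothesis), and that for the term with $(U_L-U_M)$ on the left you split $f=\sqrt f\cdot\sqrt f$, whereas the paper factors $f(H)=H^{-n_o}\cdot\bigl(H^{n_o}f(H)\bigr)$ with $x^{n_o}f(x)\in\C$.
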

\begin{proof}
Since $[H_L,H_M]=[H_L,H]=0$ for each $L$ and $M$, $M>L$, it is
easy to check that
$$I_{LM}:=\left\|f(H)\left(e^{i\,H_L\,t}-e^{i\,H_M\,t}\right)\,H^k\right\|=
2\,\left\|f(H)\sin\left(\frac{H_{ML}\,t}{2}\right)\,H^k\right\|,$$
where $H_{ML}=H_M-H_L$. Using the spectral decomposition of $H$ we
further get
$$
I_{LM}\leq
2\,\sum_{s=L+1}^M\,f(x_s)\,x_s^k\,\left|\sin\left(\frac{x_s\,t}{2}\right)\right|\,\|P_s\|\leq
2\,\sum_{s=L+1}^M\,f(x_s)\,x_s^k,
$$
which goes to zero when $M,L\rightarrow\infty$  for any divergent
sequence $\{x_n\}$ satisfying our assumption since $f(x)$ belongs
to $\C$. This is because surely there exists a positive constant
$c>0$ such that $f(x_s)\leq\frac{c}{x_s^{k+n_o}}$ for each
$s\in\mathbb{N}_0$, so that $I_{LM}\leq
2c\,\sum_{s=L+1}^M\,\frac{1}{x_s^{n_o}}$.

To prove now that $\{\alpha_L^t(X)\}$ is $\tau-$Cauchy for each
$X\in\Lc^\dagger(\D)$ we consider the  seminorm $
G_{LM}:=\|f(H)\left(\alpha_L^t(X)-\alpha_M^t(X)\right)\,H^k\|$
which, adding and subtracting $e^{iH_L\,t}\,X\,e^{-iH_M\,t}$, can
be estimated as follows
$$
G_{LM}\leq \|f(H)\,X
\left(e^{-iH_L\,t}-e^{-iH_M\,t}\right)\,H^k\|+
\|f(H)\left(e^{iH_L\,t}-e^{iH_M\,t}\right)\,X\,H^k\|\leq$$
$$
\leq
\|f(H)XH^{k+n_o}\|\,\|H^{-n_o}\left(e^{-iH_L\,t}-e^{-iH_M\,t}\right)\|+
\|H^{-n_o}\left(e^{iH_L\,t}-e^{iH_M\,t}\right)\|\,\|H^{n_o}f(H)XH^k\|,
$$
where $n_o$, at this stage, can be an arbitrary natural number. Of
course, both $\|f(H)XH^{k+n_o}\|$ and $\|H^{n_o}f(H)XH^k\|$ are
finite, since $X\in\Lc^\dagger(\D)$. We now choose the number
$n_o$ in such a way that $\|H^{-n_o}\left(e^{\pm iH_L\,t}-e^{\pm
iH_M\,t}\right)\|\rightarrow 0$ when $L,M\rightarrow\infty$. This
can be easily done when the sequence $\{x_s\}$ is given: it is
enough to take $n_o$ in such a way that $\{x_s^{-n_o}\}\in
l^1(\mathbb{N})$, $\sum_{s=1}^\infty\,x_s^{-n_o}<\infty$. Indeed,
with this choice and with the same estimates as above, we get for
instance
$$\|H^{-n_o}\left(e^{-iH_L\,t}-e^{-iH_M\,t}\right)\|=2\left\|
\sum_{s=L+1}^M\,x_s^{-n_o}\,\sin\left(\frac{x_s\,t}{2}\right)\,P_s\right\|\leq
\sum_{s=L+1}^M\,x_s^{-n_o}\rightarrow 0,
$$
when $M,L\rightarrow\infty$.

\end{proof}

This Proposition can be used to define both $e^{\pm iHt}$ and
$\alpha^t(X)$ for each $X\in\Lc^\dagger(\D)$: $e^{\pm
iHt}:=\tau-\lim_L\,e^{\pm iH_L\,t}$ and
$\alpha^t(X):=\tau-\lim_L\,e^{ iH_L\,t}\,X\,e^{- iH_L\,t}$, making
no use of intrinsically ill-defined series expansions for the
unbounded operator $H$. Of course, one could also try to give a
meaning to these quantities using the spectral theorem. However,
the procedure presented here seems promising for a possible
extension  to the situation in which no hamiltonian operator
exists, as for instance for mean field models, which are crucial
in concrete physical applications. A deeper analysis on this
aspect is in progress.

Another interesting remark concerns the following formula, which
relates $e^{iH_Lt}$ and $e^{iHt}$, which is obtained under the
assumption that all the quantities are well defined: \be
e^{iH_L\,t}=\I-Q_{L+1}+Q_{L+1}\,e^{iH\,t}\,Q_{L+1}.\label{35}\en
Indeed, since all the powers of $H$ belong to $\Lc^+(\D)$, while
$H_L^n$ belongs to $B(\Hil)$ for each $L$ and each $n$, it is
straightforward to check that $H_L^n=Q_{L+1}\,H^n\,Q_{L+1}$.
Therefore
$$
e^{iH_L\,t}=\I+iH_L\,t+\frac{1}{2!}\,(iH_L\,t)^2+\frac{1}{3!}\,(iH_L\,t)^3+\ldots=$$
$$=\I-Q_{L+1}+Q_{L+1}
\left(\I+iH\,t+\frac{1}{2!}\,(iH\,t)^2+\frac{1}{3!}\,(iH\,t)^3+\ldots\right)Q_{L+1},
$$
which returns (\ref{35}) if the series above converges. In
general, therefore, (\ref{35}) must be seen only as a formal
equality.

\vspace{2mm} Let us now consider the dynamics at the infinitesimal
level, that is at the level of the derivations. This kind of
problems has been recently analyzed in \cite{bit} and \cite{bit2}.
The first obvious remark is that, since $\tau-\lim_{L}\,H_L^k=H^k$
for all $k\in\N_o$, and since the multiplication is separately
continuous in the topology $\tau$, the {\em regularized}
derivation $\delta_L(X)=i[H_L,X]$ converges to the derivation
$\delta(X)=i[H,X]$ in $\tau$, for each operator
$X\in\Lc^\dagger(\D)$. However, this is not enough to conclude
that also the repeated commutators do converge, since these
contain contributions like $H_L^n\,X\,H_L^m$. Nevertheless,
introducing recursively the $k-th$ regularized derivation as
$\delta_L^k(X)=i[H_L,\delta_L^{k-1}(X)]$, the following
Proposition holds true:
\begin{prop}
For each $k\geq1$ and for each $X\in\Lc^\dagger(\D)$,
$\tau-\lim_{L}\,\delta_L^k(X)$ exists in $\Lc^\dagger(\D)$, and it
defines an operator which we call $\delta^{(k)}(X)$, and which
obeys the following recursion formula:
$\delta^{(k)}(X)=i[H,\delta^{(k-1)}(X)]$.

\end{prop}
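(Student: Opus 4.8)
The plan is to prove this by induction on $k$, closely following the strategy already used in Proposition~\ref{prop1}. The base case $k=1$ is exactly the first remark made before the statement: since $\tau-\lim_L H_L = H$ and multiplication is separately $\tau$-continuous, $\delta_L(X) = i[H_L,X]$ converges to $i[H,X]$ in $\tau$, which we name $\delta^{(1)}(X)$. For the inductive step I would assume that $\delta_L^{k-1}(X)$ is $\tau$-Cauchy (hence $\tau$-convergent, since $\Lc^+(\D)[\tau]$ is complete) with limit $\delta^{(k-1)}(X)$, and that this limit lies in $\Lc^\dagger(\D)$. I would then show that $\delta_L^k(X) = i[H_L, \delta_L^{k-1}(X)]$ is itself $\tau$-Cauchy.

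The key issue, flagged in the text, is that the naive continuity argument does not suffice for repeated commutators, because expanding $\delta_L^k(X)$ produces terms of the form $H_L^n \, X \, H_L^m$ with both factors depending on $L$, so one cannot simply invoke separate continuity of multiplication. The workable approach is to estimate the seminorm $\|f(H)(\delta_L^k(X) - \delta_M^k(X)) H^k\|$ directly, adding and subtracting a suitable mixed term $i[H_L, \delta_M^{k-1}(X)]$ so as to split the difference into
\[
\delta_L^k(X) - \delta_M^k(X) = i[H_L,\, \delta_L^{k-1}(X) - \delta_M^{k-1}(X)] + i[H_L - H_M,\, \delta_M^{k-1}(X)].
\]
The first commutator is controlled because $\delta_L^{k-1}(X) - \delta_M^{k-1}(X)$ is small in $\tau$ by the inductive hypothesis, while $H_L$ acts as a fixed (in the relevant sense) bounded multiplier that can be absorbed into the $f(H)$ and $H^k$ factors using $H_L = Q_{L+1} H Q_{L+1}$ from (\ref{34}) together with (\ref{211bis}). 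The second commutator is controlled by the smallness of $H_M - H_L = \sum_{s=L+1}^M x_s P_s$ against the decay of $f$: exactly as in Proposition~\ref{prop1}, choosing $n_o$ with $\{x_s^{-n_o}\} \in l^1(\N)$ and writing $f(x_s) \leq c\, x_s^{-(k+n_o)}$ forces the convergent tail $\sum_{s=L+1}^M x_s^{-n_o}$ to dominate, so this term vanishes as $L,M \to \infty$.

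The main obstacle is the bookkeeping in the first term, i.e. verifying that the factor $H_L$ (and more precisely the repeated commutator structure, since $\delta_L^{k-1}(X)$ hides nested powers $H_L^n (\cdots) H_L^m$) can be handled uniformly in $L$ by the seminorm machinery of (\ref{32})--(\ref{33}). The clean way to manage this is to note that for any $Y \in \Lc^+(\D)$ one has the elementary bound $\|H_L\, Y\|^{f,k} \leq \|Y\|^{f',k'}$ for appropriately shifted $f' \in \C$ and $k' = k+1$, because $H_L = Q_{L+1} H Q_{L+1}$ means left-multiplication by $H_L$ is dominated by left-multiplication by $H$ uniformly in $L$ (the projectors $Q_{L+1}$ having norm $1$), and similarly for right-multiplication; this uniformity is what lets the $\tau$-smallness of $\delta_L^{k-1}(X) - \delta_M^{k-1}(X)$ transfer to $\delta_L^k(X) - \delta_M^k(X)$. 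Once $\tau$-Cauchyness is established, completeness of $\Lc^+(\D)[\tau]$ gives the limit $\delta^{(k)}(X) \in \Lc^\dagger(\D)$, and passing to the limit in $\delta_L^k(X) = i[H_L, \delta_L^{k-1}(X)]$ — again using separate $\tau$-continuity of multiplication, now legitimate because one factor ($H_L \to H$) and the other ($\delta_L^{k-1}(X) \to \delta^{(k-1)}(X)$) converge separately — yields the recursion $\delta^{(k)}(X) = i[H, \delta^{(k-1)}(X)]$, completing the induction.
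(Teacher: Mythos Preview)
Your proposal is correct and uses essentially the same ingredients as the paper: induction on $k$, splitting the commutator difference into a piece controlled by the inductive hypothesis (absorbing the stray $H_L$ via $xf(x)\in\C$ and $\|Q_{L+1}\|=1$) and a piece controlled by the smallness of $H_L-H_M$ through Lemma~1. The one organizational difference is that the paper does not run a Cauchy argument in $L,M$: it first \emph{defines} the candidate limit $\delta^{(k+1)}(X):=i[H,\delta^{(k)}(X)]$ (well defined in $\Lc^\dagger(\D)$ by the inductive hypothesis) and then proves directly that $\|\delta_L^{k+1}(X)-\delta^{(k+1)}(X)\|^{f,l}\to 0$, splitting $H_L\delta_L^k(X)-H\delta^{(k)}(X)$ by adding and subtracting $H\delta_L^k(X)$.

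This buys the paper one thing you should note: your final identification step, ``passing to the limit in $\delta_L^k(X)=i[H_L,\delta_L^{k-1}(X)]$ using separate $\tau$-continuity,'' is not quite legitimate as stated, since \emph{both} factors depend on $L$ and separate continuity only handles one varying factor at a time. The paper's direct comparison sidesteps this entirely. In your framework the fix is immediate---just reuse the uniform-in-$L$ bound $\|f(H)H_L\,Y\,H^l\|\le\|(Hf)(H)\,Y\,H^l\|$ that you already established, together with an add-and-subtract of $H_L\delta^{(k-1)}(X)$---but the appeal to separate continuity alone is a slip.
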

\begin{proof}
Our claim is proved by induction. The statement is clearly true
for $k=1$. Suppose therefore it holds for a fixed $k$, that is
that
$\tau-\lim_{L}\,\delta_L^k(X)=\delta^{(k)}(X)\in\Lc^\dagger(\D)$.
This means that $\delta^{(k+1)}(X)=i[H,\delta^{(k)}(X)]$ is well
defined in $\Lc^\dagger(\D)$. Then we have
$$
\|f(H)\left(\delta_L^{k+1}(X)-\delta^{(k+1)}(X)\right)\,H^l\|\leq$$
$$\leq
\|f(H)\left(H_L\,\delta_L^{k}(X)-H\,\delta^{(k)}(X)\right)\,H^l\|+
\|f(H)\left(\delta_L^{k}(X)\,H_L-\delta^{(k)}(X)\,H\right)\,H^l\|=:P_1^L+P_2^L,
$$
with obvious notation. Adding and subtracting $H\,\delta_L^{k}(X)$
in $P_1^L$ we get
$$
P_1^L\leq
\|f(H)(H_L-H)\,\delta_L^{k}(X)\,H^l\|+\|f(H)H\,\left(\delta_L^{(k)}(X)-\delta^{(k)}(X)\right)\,H^l\|
=:P_{1,A}^L+P_{1,B}^L
$$
again with obvious notation. Since $x\,f(x)\in\C$ for each
$f(x)\in \C$, our induction assumption easily implies that
$\lim_{L}\,P_{1,B}^L=0$. As for $P_{1,A}^L$, recalling that
$[H,H_L]=[H,Q_L]=0$, we deduce that
$$
P_{1,A}^L\leq
\|H^{-1}\,(Q_L-\I)\|\,\|f(H)H^2\,\delta_L^k(X)\,H^l\|\rightarrow 0
$$
when $L\rightarrow\infty$ using Lemma 1 above, since
$x^2\,f(x)\in\C$ and since $\delta_L^k(X)$, being
$\tau$-convergent, is necessarily $\tau$-bounded. Therefore we
have $\lim_{L}\,P_{1}^L=0$. The proof of $\lim_{L}\,P_{2}^L=0$
follows essentially the same steps.

\end{proof}

It is worth remarking that we have used here, as in \cite{bagtra},
the notation $\delta^{(k)}(X)$ instead of a maybe more natural
$\delta^k(X)$ since this cannot in general be defined, as one
expects, as $\delta^k(X)=i[H,\delta^{k-1}(X)]$, because the rhs of
this equation is not everywhere defined when $H$ is unbounded.

What this Proposition incidentally proves is that the series in
$\alpha_L^t(X)=\sum_{k=0}^\infty\,\frac{t^k}{k!}\,\delta_L^k(X)$
is term to term $\tau$-convergent to
$\sum_{k=0}^\infty\,\frac{t^k}{k!}\,\delta^{(k)}(X)$, which could
be used to  define $\alpha^t(X)$. Using our previous results we
can therefore state the following: for each $X\in\Lc^\dagger(\D)$
and for each $t\in\mathbb{R}$, \be
e^{iHt}\,X\,e^{-iHt}=\alpha^t(X)=\tau-\lim_{L\rightarrow\infty}\,\alpha_L^t(X)=
\tau-\lim_{L\rightarrow\infty}\,\sum_{k=0}^\infty\,\frac{t^k}{k!}\,\delta_L^k(X)\label{37}\en

\section{Approximated coherent states}
This section is devoted to the possibility of using the operators
$A_L$ and $A_L^\dagger$ to generate {\em approximate} coherent
states (ACS), that is vectors, depending on the regularizing
parameter $L$, which share all the properties of the coherent
states in the limit $L\rightarrow\infty$ and an approximated
version of these  if $L$ is left finite.

In the old literature, see \cite{ks} for instance, a standard
coherent state (SCS) is a vector arising from the action of the
unitary operator $U(z)=e^{z\,a^\dagger-\overline{z}\,a}$,
$z\in\mathbb{C}$ and $[a,a^\dagger]=\I$, on the vacuum of $a$,
$\Phi_0$, $a\Phi_0=0$: $|z>=U(z)\Phi_0$. These normalized vectors
can be written in other equivalent ways, introducing the o.n.
basis $\{\Phi_n,\,n\in\mathbb{N}_0\}$ where
$\Phi_n=\frac{(a^\dagger)^n}{\sqrt{n!}}\,\Phi_0$, as follows: \be
|z>=U(z)\Phi_0=e^{-|z|^2/2}e^{z\,a^\dagger}\,\Phi_0=e^{-|z|^2/2}\,\sum_{k=0}^\infty
\,\frac{z^n}{\sqrt{n!}}\,\Phi_n.\label{51}\en They share a lot of
interesting properties, among which the most interesting for us
are the following:\begin{enumerate} \item $|z>$ is an eigenstate
of $a$: $a|z>=z|z>$.\item They satisfy a {\em resolution of the
identity}: $\frac{1}{\pi}\,\int\,d^2z\,|z><z|=\I$.\item They
saturate the Heisenberg uncertainty principle: let
$q=\frac{a+a^\dagger}{\sqrt{2}}$,
$p=\frac{a-a^\dagger}{i\,\sqrt{2}}$, $(\Delta X)^2=<X^2>-<X>^2$
for $X=q,p$, then $\Delta q\,\Delta p=\frac{1}{2}$.
\end{enumerate}
These properties are recovered using a different definition for
the coherent states, see \cite{ali} and references therein,
definition which generalizes the one above and which appears
strictly related to the procedure introduced in Section II.
Starting from a sequence $\{x_l,\,l\in\N_0\}$, of non negative
numbers, $x_l\geq 0$ for all $l\in\N_0$, it is possible to define
some vectors, parametrized by a complex $z$,  as follows: \be
\Xi(z):=N(|z|^2)^{-1/2}\,\sum_{k=0}^\infty
\,\frac{z^n}{\sqrt{x_n!}}\,\Phi_n,\label{52}\en where
$N(|z|^2)=\sum_{k=0}^\infty \,\frac{|z|^{2n}}{x_n!}$, so that
$<\Xi,\Xi>=1$ for all $|z|\leq\rho$, $\rho$ being the radius of
convergence of the series for $N$, and where $x_0!=1$ and
$x_n!=x_1\,x_2\ldots x_n$. In particular, if $A$ is the operator
in (\ref{27}), we have $A\,\Xi(z)=z\,\Xi(z)$, so that these
generalized coherent states (GCS) are again eigenstates of the
generalized annihilation operator $A$. Moreover, in \cite{ali} it
is also shown that the existence of a resolution of the identity,
that is the existence of a measure $d\nu(z,\overline{z})$ such
that
$\int_{C_\rho}\,N(|z|^2)\,|\Xi(z)><\Xi(z)|\,d\nu(z,\overline{z})=\I$,
 is related to the existence of a solution of the
following {\em moment problem}: we put $z=r\,e^{i\theta}$,
$d\nu(z,\overline{z})=d\theta\,d\lambda(r)$,
$C_\rho=\{z=r\,e^{i\theta}, \,\theta\in[0,2\pi[, r\in[0,\rho[\}$,
then we want $d\lambda(r)$ to be such that \be
\int_0^{\rho}\,d\lambda(r)\,r^{2k}=\frac{x_k!}{2\pi}, \quad
\forall k\in\N_0. \label{53}\en It is known that this problem has
not always solution, but when it does, then a resolution of the
identity can be established.

Finally, if we introduce two self-adjoint operators (the
generalized position and momentum operators)
$Q=\frac{A+A^\dagger}{\sqrt{2}}$,
$P=\frac{A-A^\dagger}{i\,\sqrt{2}}$, then $\Xi(z)$ saturates again
the Heisenberg uncertainty principle, which now can be written as
\be \Delta Q\,\Delta
P=\frac{1}{2}\left|<AA^\dagger>-|z|^2\right|.\label{54}\en Notice
that, if $x_k=k$, and therefore $A=a$, this returns the standard
expression: $\Delta Q\,\Delta P=\frac{1}{2}$.

As we observe, the three possible equivalent definitions for the
SCS are replaced by an unique definition for the GCS. This is due
to the fact that, since in general $[A,A^\dagger]\neq\I$, the
 equality $e^{z\,a^\dagger-\overline{z}\,a}=e^{-|z|^2/2}\,e^{z\,a^\dagger}\,
 e^{-\overline{z}\,a}$ does not extend any-longer: $e^{z\,A^\dagger-\overline{z}\,A}\neq
 e^{-|z|^2/2}\,e^{z\,A^\dagger}\,
 e^{-\overline{z}\,A}$. More generalizations can be found in \cite{aag}, but
 they will have no role along this paper. Of course, since $e^{-|z|^2/2}\,\sum_{k=0}^\infty
\,\frac{z^n}{\sqrt{n!}}\,\Phi_n$ does not explicitly refer to the
creation and annihilation operators, this can be naturally
generalized as in (\ref{52}). We want now to introduce our ACS,
which should be defined starting from
$A_L^\sharp=Q_{L+1}\,A^\sharp\,Q_{L+1}$, and we discuss their
relations with GCS.

For that we define, starting from the fixed sequence
$\{x_l,\,l\in\N_0\}$, the analytic function
$F(z)=\sum_{k=0}^\infty \,\frac{z^n}{x_n!}$, whose radius of
convergence is exactly $\rho$. We use $F(z)$ to define our ACS in
the following way: \be
\Psi_L(z):=N_{\Psi_L}(|z|^2)^{-1/2}\,F(zA_L^\dagger)\,\Phi_0\label{55}\en
where $N_{\Psi_L}(|z|^2)$ must be chosen in such a way that
$<\Psi_L(z),\Psi_L(z)>=1$ for all $L$ and for each $z$ inside the
domain of convergence.

If, for instance, $x_n=n$, then $F(z)=e^z$. In this case we claim
that $\lim_{L\rightarrow\infty}\Psi_L(z)=|z>$, that is we recover
the SCS. This statement will be proved in the following. Here we
want to prove that $\Psi_L(z)$ can be written in another form,
which is more closely related to definition (\ref{52}) and which
is useful for further considerations. The starting point for this
is the result in (\ref{214}) which states, in particular, that
$\left(A_L^\dagger\right)^j=0$ for $j=L+2, L+3,\ldots$, and that
$A_L^\dagger\Phi_0=\sqrt{x_1}\,\Phi_1$,
$\left(A_L^\dagger\right)^2\Phi_0=\sqrt{x_1\,x_2}\,\Phi_2,
\ldots$, $\left(A_L^\dagger\right)^{L+1}\Phi_0=\sqrt{x_1\ldots
x_{L+1}}\,\Phi_{L+1}$. Therefore we have
$$
\Psi_L(z):=N_{\Psi_L}(|z|^2)^{-1/2}\,F(zA_L^\dagger)\,\Phi_0=N_{\Psi_L}(|z|^2)^{-1/2}\,
\sum_{k=0}^\infty\,\frac{\left(zA_L^\dagger\right)^k}{x_k!}\,\Phi_0=$$
$$=N_{\Psi_L}(|z|^2)^{-1/2}\,
\sum_{k=0}^{L+1}\,\frac{\left(zA_L^\dagger\right)^k}{x_k!}\,\Phi_0,
$$
which gives the following alternative expression for $\Psi_L(z)$,
\be
\Psi_L(z):=N_{\Psi_L}(|z|^2)^{-1/2}\,\sum_{k=0}^{L+1}\,\frac{z^k}{\sqrt{
x_k! }}\,\Phi_k. \label{56}\en This formula shows that $\Psi_L(z)$
is a finite linear combination of the $\Phi_j$'s, and, recalling
(\ref{52}), is also  the most natural one: it can be obtained from
(\ref{52}) simply restricting the series to the first $L+1$ terms
and, as a consequence, modifying  also the normalization, which
must be chosen as
$N_{\Psi_L}(|z|^2)=\sum_{k=0}^{L+1}\,\frac{|z|^{2k}}{{ x_k! }}$.
Recalling definitions (\ref{21}) and (\ref{52}) we can easily
relate $\Psi_L(z)$ with $\Xi(z)$: \be
\Psi_L(z)=\sqrt{\frac{N(|z|^2)}{N_{\Psi_L}(|z|^2)}}\,Q_{L+1}\,\Xi(z),\label{56bis}\en
which, of course, indicates that
$\lim_{L\rightarrow\infty}\Psi_L(z)=\Xi(z)$, as expected.

Formulas (\ref{55}) and (\ref{56}) are, in a certain sense, our
counterparts of (\ref{51}) for these ACS: they extend the second
and the third possibilities in (\ref{51}) to the case in which
$x_n$ is generic. Notice that, if $x_n=n$, then we recover  a sort
of natural cutoff of the SCS.

The ACS $\Psi_L(z)$ turns out to be an approximated eigenstate of
$A_L$. Indeed we have \be A_L
\Psi_L(z)=z\,\sqrt{\frac{N_{\Psi_{L-1}}(|z|^2)}{N_{\Psi_{L}}(|z|^2)}}\,\Psi_{L-1}(z),\label{57}\en
which formally converges, when $L\rightarrow\infty$ to the
eigenvalue equation $A\Xi(z)=z\,\Xi(z)$. Let us first prove
equation (\ref{57}), and then we will discuss in more details what
happens in the limit $L\rightarrow\infty$.

We start remarking that, for each fixed $L$, $A_L\Phi_0=0$,
$A_L\,\Phi_k=\sqrt{x_k}\,\Phi_{k-1}$, for $k=1,2,\ldots,L+1$, and
$A_L\,\Phi_k=0$ for $k\geq L+2$. Therefore we have
$$
A_L
\Psi_L(z)=N_{\Psi_L}(|z|^2)^{-1/2}\,\sum_{k=0}^{L+1}\,\frac{z^k}{\sqrt{
x_k! }}\,
A_L\Phi_k=N_{\Psi_L}(|z|^2)^{-1/2}\,\sum_{k=1}^{L+1}\,\frac{z^k}{\sqrt{
x_k! }}\, \sqrt{x_k}\,\Phi_{k-1},
$$
so that formula (\ref{57}) immediately follows.

As for the limit for $L$ divergent, it is possible to check that
$$
\lim_{L\rightarrow\infty}\,\|A_L
\Psi_L(z)-A\Xi(z)\|=\lim_{L\rightarrow\infty}\,\left\|z\,
\sqrt{\frac{N_{\Psi_{L-1}}(|z|^2)}{N_{\Psi_{L}}(|z|^2)}}\,
\Psi_{L-1}(z)-z\Xi(z)\right\|=0.
$$
We leave the proof of this statement to the reader. Here we just
notice that this shows that (\ref{57}) is an approximated version
of the eigenvalue equation $A\Xi(z)=z\,\Xi(z)$.

We can also check that the states $\Psi_L(z)$ generate an
approximated decomposition of the identity. Indeed, if we take
$d\nu(z,\overline{z})$ as before, with $d\lambda(r)$ satisfying
the moment problem in (\ref{53}), with exactly the same strategy
we can check that
$\int_{C_\rho}\,N_{\Psi_{L}}(|z|^2)\,|\Psi_L(z)><\Psi_L(z)|\,d\nu(z,\overline{z})=
\sum_{k=0}^{L+1}\,P_k=Q_{L+1}$, which is not $\I$ but  tends to
$\I$ when $L\rightarrow\infty$ in the strong topology.

The final step of this analysis is the uncertainty relation.
Again, we expect that this is saturated only in an approximated
version and this is indeed what happens. Defining as usual
$Q_L=\frac{A_L+A_L^\dagger}{\sqrt{2}}$ and
$P_L=\frac{A_L-A_L^\dagger}{i\,\sqrt{2}}$, and repeating the usual
computations, we find that \be \Delta Q_L\,\Delta
P_l=\frac{1}{2}\,\sqrt{\Gamma_{2,L}^2-\Gamma_{1,L}^2},\label{58}\en
where
$$\Gamma_{1,L}=<A_L\,A_L^\dagger>+|z|^2\,\frac{N_{\Psi_{L-2}}(|z|^2)}{N_{\Psi_{L-1}}(|z|^2)}
\left(1-2\,\frac{N_{\Psi_{L-2}}(|z|^2)}{N_{\Psi_{L-1}}(|z|^2)}\right),$$
and
$$\Gamma_{2,L}=(z^2+\overline{z}^2)\,\frac{1}{N_{\Psi_{L-1}}(|z|^2)}
\left(N_{\Psi_{L-1}}(|z|^2)-\,\frac{N_{\Psi_{L-2}}(|z|^2)}{N_{\Psi_{L-1}}(|z|^2)}\right).$$
To check whether the Heisenberg inequality is saturated, that is
if $\Delta Q_L\,\Delta P_l=\frac{1}{2}\,(<[A_L,A_L^\dagger]>)$, we
should compute the rhs, which gives
$\frac{1}{2}\,\left(<A_L\,A_L^\dagger>-|z|^2\,\frac{N_{\Psi_{L-2}}(|z|^2)}{N_{\Psi_{L-1}}(|z|^2)}\right)$.
We see that, for $L<\infty$, in general $\Delta Q_L\,\Delta
P_l=\frac{1}{2}\,\sqrt{\Gamma_{2,L}^2-\Gamma_{1,L}^2}\geq
\frac{1}{2}\,\left(<A_L\,A_L^\dagger>-|z|^2\,\frac{N_{\Psi_{L-2}}(|z|^2)}{N_{\Psi_{L-1}}(|z|^2)}\right)$.
However,  since $\Gamma_{1,L}\rightarrow <AA^\dagger>- |z|^2$ and
$\Gamma_{2,L}\rightarrow 0$, when $L\rightarrow\infty$, then
$\Delta Q_L\,\Delta
P_l\rightarrow\frac{1}{2}\left|<AA^\dagger>-|z|^2\right|$ which
coincides with the limit of
$\frac{1}{2}\,\left(<A_L\,A_L^\dagger>-|z|^2\,\frac{N_{\Psi_{L-2}}(|z|^2)}{N_{\Psi_{L-1}}(|z|^2)}\right)$
for $L$ diverging . In other words, $\Psi_L(z)$ saturates the
Heisenberg uncertainty relation in the limit $L\rightarrow\infty$.

\vspace{2mm}

We want now to find the explicit expression of a vector in $\Hil$
in order that it is an eigenstate of $A_L$. This is a sort of {\em
inverse problem} of the one considered so far.

Since $\Phi_j$ is an o.n. basis of $\Hil$, if such an eigenstate
$\Upsilon_L(z)$ exists it must admit an expansion like
$\Upsilon_L(z)=\sum_{k=0}^\infty \,b_k^{(L)}(z)\,\Phi_k$, and the
problem consists in finding the coefficients $b_k^{(L)}(z)$ in
such a way that $A_L\,\Upsilon_L(z)=z\,\Upsilon_L(z)$ is
satisfied. However, it is not difficult to check that this
requirement implies that $b_j^{(L)}(z)=0$ for $j=0,1,2,\ldots$, so
that no nontrivial solution of the eigenvalue equation can exist.
For this reason we weaken the requirement, following a suggestion
already contained in formula (\ref{57}): we look for an {\em
approximate eigenstate} of $A_L$, that is a vector of $\Hil$ such
that $A_L\,\Upsilon_L(z)=z\,\Upsilon_{L-1}(z)$. It is convenient
to expand $\Upsilon_L(z)$ as it follows \be
\Upsilon_L(z)=\sum_{k=0}^\infty
\,b_k^{(L+1)}(z)\,\Phi_k.\label{59}\en We need to compute the
coefficients of this expansion. In order to satisfy the equality
$A_L\Upsilon(z)_L=z\Upsilon(z)_{L-1}$ these must obey the
following recursion formula: \be
b_j^{(L+1)}(z)\,\sqrt{x_j}=z\,b_{j-1}^{(L)}(z)\label{510}\en with
$b_j^{(L+1)}(z)=0$ for each $j>L+1$. This implies that
$\Upsilon_L(z)$  must be of the following form \be
\Upsilon_L(z)=\sum_{k=0}^{L+1}\,b_0^{(L+1-k)}(z)\,\frac{z^k}{\sqrt{x_k!}}\,\Phi_k\label{511}\en
for any choice of $b_0^{(j)}(z)$, which however should  be
conveniently chosen if we also want $\Upsilon_L(z)$ to be
normalized.

If we take for instance
$b_0^{(0)}(z)=b_0^{(1)}(z)=\ldots=b_0^{(L+1)}(z)=b(z)$, then we
get
$\Upsilon_L(z)=b(z)\,\sum_{k=0}^{L+1}\,\frac{z^k}{\sqrt{x_k!}}\,\Phi_k$,
which coincides with (\ref{56}) but for the normalization
constant, which can be computed easily, recovering exactly the
result in (\ref{56}).

Another interesting choice of $b_0^{(j)}(z)$ is the following:
$b_0^{(j)}(z)=\frac{z^j}{\sqrt{x_j!}}$, which produces the
following ACS:
$\Upsilon_L(z)=z^{L+1}\,\sum_{k=0}^{L+1}\,\frac{1}{\sqrt{x_{L+1-k}!\,x_k!}}\,\Phi_k$.
Here, as it is evident, $z$ appears only to the single power
$L+1$. Notice that, again, $\Upsilon_L(z)$ is not normalized. In
general, normalizing such a $\Upsilon_L(z)$ produce a different
function $\hat\Upsilon_L(z)=N_L^{-1/2}(z)\,\Upsilon_L(z)$, which
satisfies
$A_L\,\hat\Upsilon_L(z)=z\,\sqrt{\frac{N_{L-1}(z)}{N_L(z)}}\,\hat\Upsilon_{L-1}(z)$.
We see that this does not differ significantly from
$A_L\,\hat\Upsilon_L(z)=z\,\hat\Upsilon_{L-1}(z)$ for large $L$,
since $N_{L-1}(z)\simeq N_L(z)$ in this case.

Another possible choice that we have considered is
$b_0^{(j)}(z)=\sqrt{\frac{x_j!}{j!}}$. It is not hard to imagine
more choices, some of which could be useful in concrete
applications. In general the examples already considered show that
different choices of $b_0^{(j)}$ produce ACS with different
analytical and, possibly, physical characteristics.

Of course, the choice of the coefficients produces, in turns, the
related moment problem (\ref{53}) which must be satisfied in order
to get an approximated resolution of the identity.

\vspace{2mm}

{\bf Remark:} it is well known that coherent states are deeply
connected with squeezed states, so that one could try to repeat
the same analysis considered here for these states. This will be
discussed in a forthcoming paper.

\vspace{2mm}

We end this section considering a different kind of coherent
states, those first introduced by Gazeau and Klauder, \cite{GK},
and generalized in \cite{ba}. These states, labelled by two real
numbers $J$ and $\gamma$, can be written in terms of the o.n.
basis of a self-adjoint operator $H=H^\dagger$, $|n>$, as \be
|J,\gamma>=N(J)^{-1}\,\sum_{n=0}^\infty\,\frac{\,J^{n/2}\,e^{-i\epsilon_n\,\gamma}}{\sqrt{\rho_n}}\,|n>,\label{512}\en
where $N(J)^2=\sum_{n=0}^\infty\,\frac{\,J^{n}\,}{\rho_n}$,
$H|n>=\omega\,\epsilon_n\,|n>$, with
$0=\epsilon_0<\epsilon_1<\epsilon_2<\ldots$. It may be worth
noticing that the normalization of these GK-states is $N(J)^{-1}$
and formally differs from the one used by Ali et al, \cite{ali}.
We adopt here the same notation as in the original papers. These
states satisfy the following properties:
\begin{enumerate}
\item if there exists a non negative function, $\rho(u)$, such
that $\int_0^R\,\rho(u)\,u^n\,du=\rho_n$ for all $n\geq 0$, where
$R$ is the radius of convergence of $N(J)$, then, introducing a
measure $d\nu(J,\gamma)=N(J)^2\,\rho(J)\,dJ\,d\nu(\gamma)$, with
$\int_{\mathbb{R}}\ldots
\,d\nu(\gamma)=\lim_{\Gamma\rightarrow\infty}\,\frac{1}{2\Gamma}\,
\int_{-\Gamma}^\Gamma\ldots\,d\gamma$, the following resolution of
the identity is satisfied: \be
\int_{C_R}\,d\nu(J,\gamma)\,|J,\gamma><J,\gamma|=\int_0^R\,N(J)^2\,\rho(J)\,dJ\,
\int_{\mathbb{R}}\,d\nu(\gamma)\,|J,\gamma><J,\gamma|=\I;
\label{513}\en \item the states $|J,\gamma>$ are {\em temporarily
stable}: \be e^{-iHt}\,|J,\gamma>=|J,\gamma+\omega t>,\quad \forall
t\in\mathbb{R};\label{514}\en \item if $\rho_n=x_n!$ then they
satisfy the {\em action identity}: \be
<J,\gamma|H|J,\gamma>=J\,\omega;\label{515}\en
\item they are continuous: if $(J,\gamma)\rightarrow(J_0,\gamma_0)$
then $\||J,\gamma>-|J_0,\gamma_0>\|\rightarrow 0$.
\end{enumerate}
It is interesting to observe that the states $|J,\gamma>$ are
eigenstates of the following $\gamma-$ depending annihilation-like
operator $a_\gamma$ defined on $|n>$ as follows: \be
a_\gamma\,|n>=\left\{
    \begin{array}{ll}
        0,\hspace{4.4cm}\mbox{ if } n=0,  \\
        \sqrt{\epsilon_n}\,e^{i(\epsilon_n-\epsilon_{n-1})\,\gamma}|n-1>, \hspace{0.6cm} \mbox{ if } n>0,\\
       \end{array}
        \right.
\label{516}\en whose adjoint acts as
$a_\gamma^\dagger\,|n>=\sqrt{\epsilon_{n+1}}\,e^{-i(\epsilon_{n+1}-\epsilon_{n})\,\gamma}|n+1>$.
This shows that  $H$ can be written as
$H=\omega\,a_\gamma^\dagger\,a_\gamma$. With standard computations
we can also check that \be a_\gamma |J,\gamma>=\sqrt{J}\,
|J,\gamma>.\label{517}\en However, it should be stressed that
$|J,\gamma>$ is not an eigenstate of $a_{\gamma'}$ if
$\gamma\neq\gamma'$.

Using the suggestion coming from formula (\ref{56bis}) we define
new vectors $|J,\gamma;L>$ as \be
|J,\gamma;L>=\frac{N(J)}{N_L(J)}\,Q_{L+1}\,|J,\gamma>,\label{518}\en
where as usual $Q_{L+1}=\sum_{k=0}^{L+1}\,|k><k|$ and where
$N_L(J)^2=\sum_{k=0}^{L+1}\,\frac{J^n}{\rho_n}$. These states are
interesting, since they satisfy the following
properties:\begin{enumerate}\item they can be written as \be
|J,\gamma;L>=\frac{1}{N_L(J)}\,\sum_{n=0}^{L+1}\,\frac{\,J^{n/2}\,e^{-i\epsilon_n\,
\gamma}}{\sqrt{\rho_n}}\,|n>,\label{519}\en \item  if there exists
a function, $\rho(u)$ with the same features as above, then,
introducing a measure
$d\nu_L(J,\gamma)=N_L(J)^2\,\rho(J)\,dJ\,d\nu(\gamma)$, with
$d\nu(\gamma)$ as before, the following  identity holds true: \be
\int_{C_R}\,d\nu_L(J,\gamma)\,|J,\gamma;L><J,\gamma;L|=Q_{L+1}
\label{520}\en \item the states $|J,\gamma;L>$ are temporarily
stable: \be e^{-iHt}\,|J,\gamma;L>=|J,\gamma+\omega
t;L>\quad\forall t\in\mathbb{R},\,\forall L;\label{521}\en \item
they satisfy the following identity: \be
<J,\gamma;L|H_L|J,\gamma;L>=J\,\omega\,\left(\frac{N_{L-1}(J)}{N_L(J)}\right)^2;\label{522}\en
\item they are continuous: if
$(J,\gamma)\rightarrow(J_0,\gamma_0)$ then
$\||J,\gamma;L>-|J_0,\gamma_0;L>\|\rightarrow 0$.
\end{enumerate}
The proofs of these statements are not significantly different
from that of the original ones and will be omitted here. As we
see, the states $|J,\gamma;L>$  satisfy, mostly in an approximated
version, the requirement of the GK-coherent states. A relevant
feature is that they satisfy temporal stability exactly for each
value of $L$, even if they differ from the original GK-states
since they are just a finite linear combinations of the
eigenstates $|n>$ of $H$.

\section{Connections with quons}

In a series of papers \cite{moh,fivetc} many people have
introduced and analyzed a different kind of {\em elementary
particles}, the so called quons. They interpolate between bosons
and fermions, in the sense that they satisfy a modified version of
the canonical commutation relations which depend on a parameter
$q$. More explicitly, when $q=1$ these particles obey CCR while,
if $q=-1$, they obey CAR. Different kind of quons have been
proposed in the literature but maybe the most common are those
satisfying the following {\em q-mutators}: \be
a\,a^\dagger-q\,a^\dagger\,a=\I \quad\mbox{ or }\quad
a\,a^\dagger-q\,a^\dagger\,a=q^{-2{\hat N}}\,\I,\label{41}\en
where $\hat N$ is the number operator: $\hat N\Phi_j=j\Phi_j$,
\cite{moh}. We will call respectively {\em first} and {\em second
kind quons} those satisfying the first or the second q-mutator
above. In this section we will discuss briefly the relation
between these quons, and other obeying other generalized
q-mutation relations, with the operators $A$ and $A^\dagger$
introduced in Section II, extending some results first discussed
by Ali et al., \cite{ali2}.

The spectral decompositions for the operators $A^\dagger\,A$ and
$A\,A^\dagger$ are respectively
$\sum_{l=0}^\infty\,x_{l+1}\,P_{l+1}$ and
$\sum_{l=0}^\infty\,x_{l+1}\,P_l$, which implies that
$$A\,A^\dagger-q\,A^\dagger\,A=x_1\,P_0+(x_2-qx_1)\,P_1+(x_3-qx_2)\,P_2+\ldots.$$
This expansion is equal to the identity operator
$\I=\sum_{l=0}^\infty\,P_{l}$ if and only if $x_1=1$,
$x_2-qx_1=x_3-qx_2=x_4-qx_3=\ldots=1$. Therefore we find
 \be x_{n+1}=1+q+q^2+\ldots+q^n=
\left\{
    \begin{array}{ll}
        n+1,\hspace{2.8cm}\mbox{ if } q=1,  \\
        \frac{1-q^{n+1}}{1-q}, \hspace{2.6cm} \mbox{ if } q\neq 1.\\
       \end{array}
        \right.
\label{42} \en It is worth noticing that the result $x_n=n$ when
$q=1$ is expected: in this case, indeed, the q-mutator rules in
(\ref{41}), as well as the ones we will consider below, return the
CCR. Let us also remind that
$H_o\Phi_n=A^\dagger\,A\,\Phi_n=x_n\,\Phi_n$. This means that, if
we are interested in using the results obtained in the previous
sections to the analysis of first kind quons, then the sequence
$\{x_j\}$ must be chosen as in (\ref{42}). We notice also that
this equation, if $q\notin[-1,1[$, produce unbounded operators. In
other words, we could use the results given in Section III and IV
to study the algebraic dynamics of free quons and the ACS
associated to them.

A standard problem in this topic is related to the definition of
the  number operator $\hat N$. For first kind quons this operator
satisfies the equation $H_o=\frac{1-q^{\hat N}}{1-q}$, \cite{moh},
and can be written as $\hat
N=\frac{1}{\log(q)}\,\log(\I-H_o(1-q))$.

The same analysis can be carried out for the second kind quons. In
this case the recursive formula for $x_n$ produces  \be
x_{n+1}=q^n\,\sum_{k=0}^n\,\left(\frac{1}{q^3}\right)^k= \left\{
    \begin{array}{ll}
        n+1,\hspace{4.0cm}\mbox{ if } q=1,  \\
        q^n\,\frac{1-(1/q^3)^{n+1}}{1-1/q^3}, \hspace{2.7cm} \mbox{ if } q\neq 1.\\
       \end{array}
        \right.
\label{43} \en Even in this case a number operator $\hat N$ can be
introduced. It must satisfy the following relation $H_o=q^{{\hat N
}-1}\,\frac{1-(1/q^3)^{{\hat N}}}{1-1/q^3}$. Such an operator
satisfies again $\hat N\Phi_j=j\,\Phi_j$, \cite{moh}.

These conclusions can be easily generalized to different
expressions of q-mutators which are not usually considered in the
literature: in general, given $H_o$ satisfying the eigenvalue
equation $H_o\,\Phi_j=x_j\,\Phi_j$, and introducing a map
$X:\N_0\rightarrow\R$ such that, for each $m\in\N_0$ $X(m)=x_m$,
the number operator $\hat N$ is related to $H_o$ by $H_o=X(\hat
N)$, which, if $X$ is invertible, gives $\hat N$ as a function of
$H_o$: $\hat N=X^{-1}(H_o)$.

Let us now introduce the following very general q-mutation
relation: \be a\,a^\dagger-q\,a^\dagger\,a=f(q,\hat
N),\label{44}\en where $f(q,\hat N)$ is a self-adjoint operator
depending on $q$ and $\hat N$ and such that
$\lim_{q\rightarrow\pm1}\,f(q,N)=1$ in some {\em topological}
sense, for instance strongly on a dense domain. Using the spectral
decomposition $f(q,\hat N)=\sum_{j=0}^\infty\,f(q,j)\,P_j$, and
repeating the same steps as before, we get the following
expression for the $x_n$'s: \be x_{n+1}=q^n\,f(q,0)+
q^{n-1}\,f(q,1)+q^{n-2}\,f(q,2)+\ldots+q\,f(q,n-1)+f(q,n)\label{45}\en
It is an easy exercise to check that this formula returns the
results already obtained for the first and the second kind quons.
Let us now consider some different examples.

We start considering $f(q,\hat N)=q^{-4\,\hat N}$. In this case,
assuming that $q\neq 1$, we obtain $x_{n}=
        q^{n-1}\,\frac{1-(1/q^5)^n}{1-1/q^5}$. Again, if $q=1$ we get $x_n=n$. This is a natural
        extension of the result in (\ref{43}) and the operator $\hat N$ satisfies the equality $H_o=q^{{\hat N
}-1}\,\frac{1-(1/q^5)^{{\hat N}}}{1-1/q^5}$.

Let us now take $f(q,\hat N)=e^{(q^2-1)\hat N}$. It is clear that
$\lim_{q\rightarrow\pm1}\,f(q,N)=1$. Applying formula (\ref{45})
we get, once again $x_n=n$ if $q=1$, while, if $q\neq 1$, we
obtain $$x_{n+1}=\frac{q^{n+1}-e^{(n+1)(q^2-1)}}{q-e^{q^2-1}},$$
so that $\hat N$ must satisfies $H_o=\frac{q^{{\hat N }}-e^{{\hat
N} (q^2-1)}}{q-e^{q^2-1}}$. Other examples can be easily
constructed, but we will not do it here.

Summarizing, the so called quons appear just as particular cases
of a much more general strategy, which consists in replacing the
eigenvalues of the harmonic oscillator $\{n\}$ with a general
non-negative sequence $\{x_n\}$ and the o.n. basis
$\Phi_n=\frac{(a^\dagger)^n}{\sqrt{n!}}\Phi_0$ with an arbitrary
o.n. basis of the Hilbert space. This is strongly related to what
has been done in the rest of this paper. A deeper analysis of
these aspects of the theory is in progress.


\section*{Acknowledgements}

This work has been financially supported in part by M.U.R.S.T.,
within the  project {\em Problemi Matematici Non Lineari di
Propagazione e Stabilit\`a nei Modelli del Continuo}, coordinated
by Prof. T. Ruggeri.

\end{document}